\newtheorem{theorem}{Theorem}[section]
\newtheorem{corollary}[theorem]{Corollary}
\theoremstyle{definition}
\newcommand{\calM}{\mathcal{M}}
\newcommand{\calS}{\mathcal{S}}
\newcommand{\vm}{\vec{m}}
\newcommand{\vv}{\vec{v}}
\newcommand{\Z}{\mathbb{Z}}
\newcommand{\N}{\mathbb{N}}
\newcommand{\R}{\mathbb{R}}
\newcommand{\setr}[2]{\left\{\ #1 \ \left|\ #2 \right. \ \right\}}
\title{Design of Geometric Molecular Bonds}
\author{David Doty and Andrew Winslow
\thanks{A preliminary draft of this article appeared as~\cite{dgmbISIT}.}
\thanks{D. Doty was supported by NSF grants 1219274, 1619343, and the Molecular Programming Project under NSF grant 1317694. Both authors were supported by NSF grant 1422152.}
\thanks{D. Doty is with the Computer Science department of the University of California, Davis, {\tt doty@ucdavis.edu}.}
\thanks{A. Winslow is with the Computer Science department of the University of Texas Rio Grande Valley, {\tt andrew.winslow@utrgv.edu}.}}
\date{}
\begin{document}

\maketitle

\begin{abstract}
An example of a \emph{specific} molecular bond is the affinity of the DNA base A for T, but not for C, G, or another A.
This contrasts \emph{nonspecific} bonds,
such as the affinity of any positive charge for any negative charge (like-unlike),
or of nonpolar material for itself when in aqueous solution (like-like).

Recent experimental breakthroughs in DNA nanotechnology~\cite{sungwook_bonds_2011,dietz_bonds_science_2015} demonstrate that a particular nonspecific like-like bond (``blunt-end DNA stacking'' that occurs between the ends of any pair of DNA double-helices) can be used to create specific ``macrobonds'' by careful geometric arrangement of many nonspecific blunt ends, motivating the need for sets of macrobonds that are \emph{orthogonal}: two macrobonds not intended to bind have relatively low binding strength, even when misaligned.

To address this need, we introduce \emph{geometric  orthogonal codes} that abstractly model the engineered DNA macrobonds as two-dimensional binary codewords.
While motivated by completely different applications, geometric orthogonal codes share similar features to the \emph{optical orthogonal codes} studied by Chung, Salehi, and Wei~\cite{optical_orthogonal_codes_1989}.
The main technical difference is the importance of 2D geometry in defining codeword orthogonality.
\end{abstract}



\pagenumbering{gobble}

\section{Introduction}\label{sec:intro}

\subsection{Structural DNA nanotechnology}
DNA nanotechnology began in the 1980s when Seeman~\cite{Seem82} showed that artificially synthesized DNA strands could be designed to automatically self-assemble nanoscale structures, rationally designed through the choice of DNA sequences.
In the past 20 years, the field has witnessed a dramatic surge in the development of basic science, \emph{in vitro} applications, such as chemical oscillators and molecular walkers, and \emph{in vivo} applications, such as drug delivery, cellular RNA sensing, and genetically encoded structures~\cite{dnaNanoSurveySeelig2015}.

A technological pillar of the field is \emph{DNA origami}, developed by Rothemund~\cite{rothemund2006folding}, a simple, fast, inexpensive, and reliable method for creating artificial 2D and 3D DNA structures, with a control resolution of a few nanometers.
DNA origami requires a single long \emph{scaffold} strand of DNA; the most commonly used is the 7249-nucleotide single-stranded genome of the bacteriophage virus M13mp18, widely and cheaply available from many biotech companies.
The scaffold is mixed with a few hundred shorter ($\approx 32$nt) synthesized DNA strands called \emph{staples}, each of which is designed to bind (through Watson-Crick complementarity) to 2-3 regions of the scaffold.
Via thermal annealing, the staples fold the scaffold strand into a shape dictated by the choice of staple DNA sequences, hence the term \emph{origami}.
The process is illustrated in Figure~\ref{fig:origami}(a), with the results shown in Figure~\ref{fig:origami}(b).

\begin{figure*}[ht]
  \begin{centering}
    \includegraphics[width=7.2in]{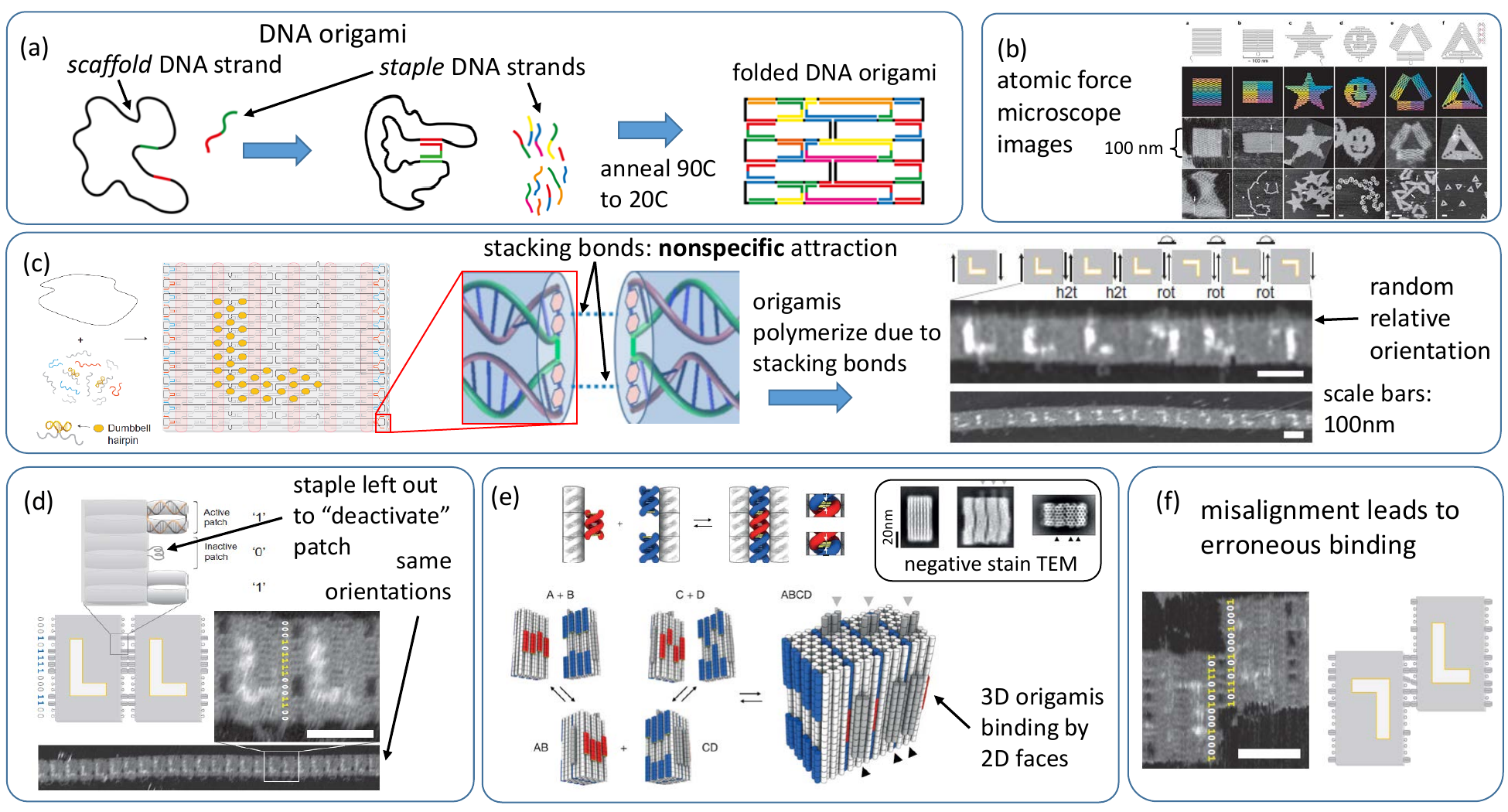}
    \caption{
      Illustration of DNA origami and geometrically programmable stacking bonds.
      {\bf (a)} DNA origami illustration (source: \texttt{http://openwetware.org/wiki/Biomod/2014/Design}).
      {\bf (b)} Atomic force microscope images of nanoscale shapes assembled by DNA origami technique (source:~\cite{rothemund2006folding}).
      {\bf (c)} Stacking bonds are nonspecific attraction occurring between the \emph{ends} of two DNA helices, such as those that appear on the edges of a DNA origami causing origamis to form long chains (\emph{polymerize}) in solution.
      Markers on the origami surface (an asymmetrical $L$ shape) reveal that the orientation of origamis in a polymer is random; i.e., some are ``upside down'' relative to others (source: \texttt{http://openwetware.org/wiki/Biomod/2014/Kansai/Experiment} and~\cite{sungwook_bonds_2011}).
      {\bf (d)} By removing certain stacking bonds at specific locations to create a binary pattern on the edge of a DNA origami, the whole edge becomes a specific ``macrobond'' that binds most strongly to another edge with the same pattern; in this case, the left side of an origami binds favorably to the right side, and less favorably to another left side (source:~\cite{sungwook_bonds_2011}).
      {\bf (e)} The technique also works to bind 3D origami using 2D patterns of stacking bonds on their faces, using stacking bonds in a slightly different way than in part (d).
      The placement of the nonspecific bonds gives the entire face a higher affinity for another face with a complementary pattern.
      Gray and black triangular arrows indicate bumps placed on origamis to allow verification in TEM images that the four monomers are in fact A/B/C/D bound as intended. (source:~\cite{dietz_bonds_science_2015}, Reprinted with permission from AAAS).
      {\bf (f)} One source of error is the matching of many stacking bonds between two misaligned faces.
      Note that this image shows a second source of error: the ``bending'' of one DNA helix to bind to another (such errors are not modeled in this paper). (source:~\cite{sungwook_bonds_2011}).
      openwetware images licensed under Creative Commons Attribution-ShareAlike 3.0 Unported: \texttt{http://www.openwetware.org/wiki/OpenWetWare:Copyright}
      Images from \cite{sungwook_bonds_2011}, \texttt{http://www.nature.com/nchem/journal/v3/n8/abs/nchem.1070.html},
      and~\cite{rothemund2006folding}, \texttt{http://www.nature.com/nature/journal/v440/n7082/abs/nature04586.html},
      Reprinted with permission from NPG)
    }
    \label{fig:origami}
  \end{centering}
  \vspace{-0.3cm}
\end{figure*}

Although the Watson-Crick pairing of bases between two single strands of DNA is very specific, DNA is known to undergo other, less specific interactions.
One well-studied interaction is called a \emph{stacking bond}, formed when \emph{any} pair of terminated double helices --- known as \emph{blunt ends} --- face each other, as shown in Figure~\ref{fig:origami}(c).
Since two edges of a standard DNA origami rectangle consist entirely of blunt ends, DNA origami rectangles are known to bind along their edges to form long polymers of many origamis, despite the fact that no hybridization between single strands occurs between them.
One way to avoid stacking between origamis is to leave out staple strands along the edge, so that rather than blunt ends, there are single-stranded loops of the scaffold strand~\cite{rothemund2006folding}.

Woo and Rothemund~\cite{sungwook_bonds_2011} turned the bug of unintended origami stacking into a feature with the following idea: leave out \emph{some} of the staples along the edge, but keep others; see Figure~\ref{fig:origami}(d).
Although individual blunt ends bind nonspecifically to others, the only way for \emph{all} blunt ends along an edge to bind is with matching blunt ends on another origami in the same relative positions.
Thus, geometric placement of blunt ends makes the entire side of an origami into a specific ``macrobond''.
Figure~\ref{fig:origami}(d) shows how this approach enforces that a set of origamis bind to form only intended arrangements.

The idea extends from 2D origami rectangles with 1D edges, to 3D origami boxes with 2D rectangular faces as demonstrated by Gerling, Wagenbauer, Neuner, and Dietz~\cite{dietz_bonds_science_2015}.\footnote{The way stacking bonds are used is slightly different. Rather than helices orthogonal to the origami face, they are parallel. As seen in Figure~\ref{fig:origami}(e), pairs of whole helices protrude and a complementary face has a two-helix ``gap'' into which these helices fit to form four total stacking bonds.}
They rationally design polymers of many origamis with prescribed sizes and shapes such as the 4-mer ABCD shown in Fig.~\ref{fig:origami}(e).

The preceding description of macrobonds is idealized: other mechanisms may permit unintended pairs of macrobonds to bind spuriously.
Figure~\ref{fig:origami}(f) shows two macrobonds aligning sufficiently many blunt ends to attach stably via two such mechanisms: \emph{flexibility} of DNA helices and \emph{misalignment} of macrobonds.
This paper is an attempt to attack the latter problem with coding theory.

\subsection{Definitions and main result}
\label{sec:intro:statement-of-result}

\newcommand{\flip}{\mathsf{rot180}}

Although inspired by work in DNA nanotechnology, the design of specific macrobonds formed by geometric arrangements of nonspecific bonds is fundamental and likely to be part of the future of nanotechnology, even if based on substrates other than DNA.
We abstract away several details of DNA origami in mathematically formulating the problem.
This subsection,~\ref{sec:intro:statement-of-result}, simply states the formal definitions and main result,
while subsection~\ref{sec:intro:relationship-defn-experiments} discusses the relationship between the definitions and the experimental motivation.

Let $[n]=\{0,1,\ldots,n-1\}$.
We model each 2D face of a monomer (e.g., a DNA origami) as a discrete $n \times n$ square $[n]^2$, with $n$ representing the placement resolution of nonspecific bonds, called \emph{patches}.
A \emph{macrobond} is a subset
$M \subseteq [n]^2$.
Given a vector $\vv \in \Z^2$, $M+\vv = \setr{\vm+\vv}{\vm \in M}$ denotes $M$ translated by $\vv$.\footnote{The translation operation $M + \vv$ is normal translation in $\Z^2$; there is no ``wrapping'' of points that are shifted beyond the edge of a macrobond around to the other edge, as in optical orthogonal codes (discussed in Section~\ref{sec:related}).
Thus, although $M \subseteq [n]^2$, potentially $(M+\vec{v}) \not\subseteq [n]^2$.}
A parameter $w \in \{2,\ldots,n^2\}$ denotes the \emph{macrobond strength} (a.k.a., \emph{codeword weight}).
A parameter $\lambda \in \{1,\ldots,w-1\}$ denotes the \emph{mismatch strength limit}.
An \emph{$(n,w,\lambda)$-geometric  orthogonal code} is a set of macrobonds $\calM = \{M_1,\ldots,M_\ell\}$, where each $M_i \subseteq [n]^2$ and $|M_i|=w$, so that for all $1 \leq i < j \leq \ell$, two conditions hold:
\begin{description}[labelindent=0.1cm]
\item[]
{\bf low cross-correlation:} $\forall \vv \in \Z^2$, $M_i \cap (M_j + \vv)| \leq \lambda$.
\item[]
{\bf low auto-correlation:} $\forall \vv \in \Z^2 \setminus \{\vec{0}\}$, $|M_i \cap (M_i + \vv)| \leq \lambda$.
\end{description}

\begin{figure}[ht]
  \begin{centering}
    \includegraphics[width=3.5in]{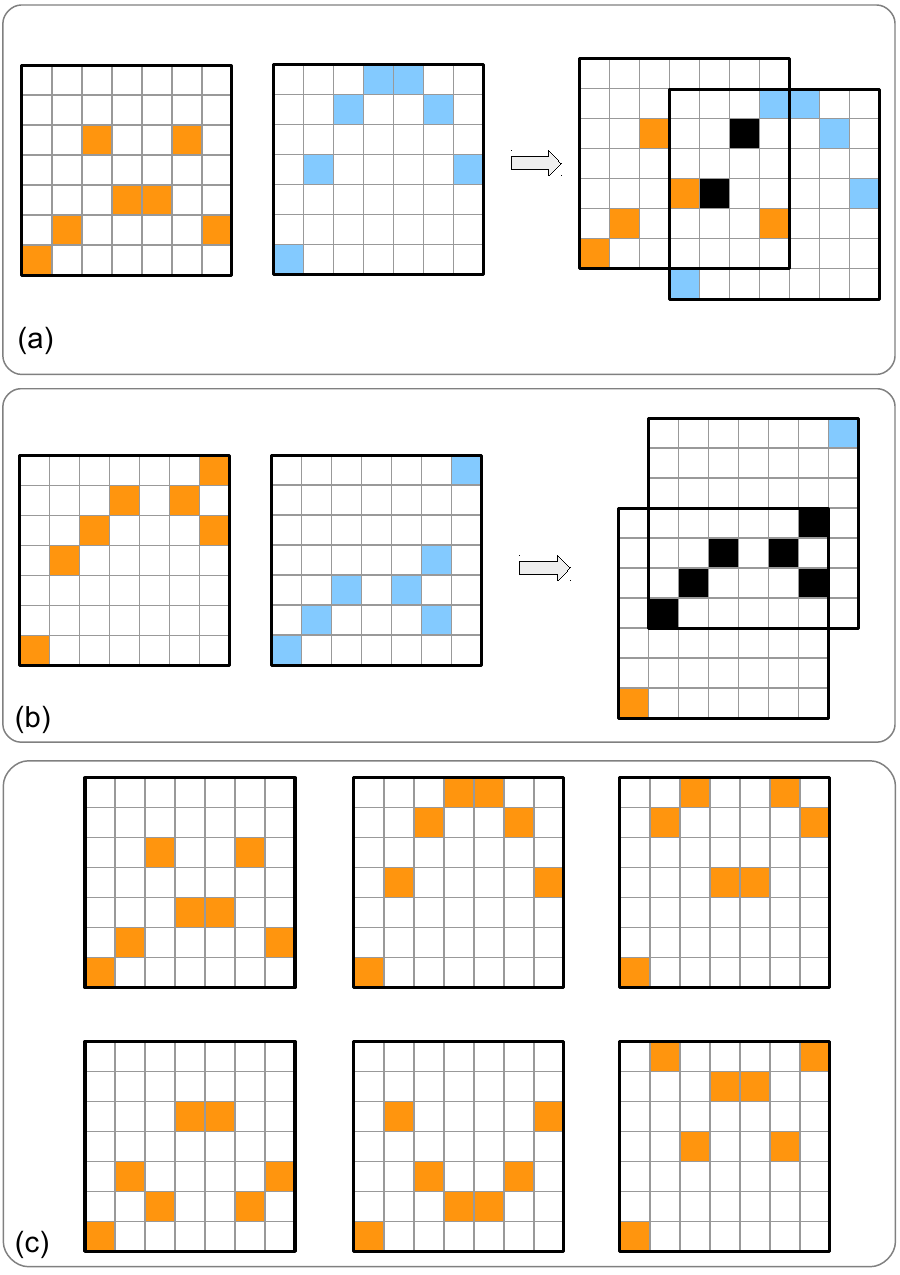}
    \caption{
      {\bf (a)}
      Two macrobonds with cross-correlation $=2$: they have translations overlapping on up to 2 points, but no more.
      {\bf (b)}
      Two macrobonds with cross-correlation $=6$.
      {\bf (c)}
      A $(7,7,2)$-geometric orthogonal code.
    }
    \label{fig:goc-example}
  \end{centering}
\end{figure}

Figure~\ref{fig:goc-example}(a) shows an example of two macrobonds with cross-correlation 2
and a translation that makes them overlap on 2 points.
Figure~\ref{fig:goc-example}(b) shows two macrobonds with cross-correlation 6.
Figure~(c) shows an example of a $(7,7,2)$-geometric orthogonal code.
(This is in fact the code produced by the algorithm of Theorem~\ref{thm:polynomial-construction} with $n=w=7$ and $\lambda=2$.)

Informally, an $(n, w, \lambda)$-geometric  orthogonal code is a set of macrobonds with $n^2$ available space for potential binding sites, total binding strength per macrobond of $w$, and spurious binding strength
(i.e., the strength of mismatched/misaligned macrobonds)
limited to at most $\lambda$.
Note that a lower value of $\lambda$ in the above definition corresponds to ``more orthogonality'':
macrobonds with maximum overlap $\lambda$ are less likely to bind spuriously than those with overlap $> \lambda$.
As with any code, the goal is to maximize the number of codewords $|\mathcal{M}|$.
The main result of this paper, Theorem~\ref{thm:polynomial-construction}, is that for all $n,\lambda \in \Z^+$ with $n$ a prime and $2 \leq \lambda < n$, there exists an efficiently computable $(n, n, \lambda)$-geometric  orthogonal code $\mathcal{M}$ with $|\mathcal{M}| = n^{\lambda-1} - n^{\lambda-2}$.
(Open question~\eqref{open-question:other-w} in Section~\ref{sec:open-questions} discusses the possibility of $(n,w,\lambda)$ codes for $n \neq w$.)

Examine the physical implementation of patches shown in Figure~\ref{fig:origami}(e), and observe that a ``bump'' patch on one macrobond cannot insert into a ``hole'' patch on another macrobond if they are rotated relative to each other,
\emph{unless} the rotation is by 180$^\circ$.\footnote{There are physical reasons to dismiss this possibility, since the energy of a stacking bond appears to weaken if the relative angles of the two DNA phosphate backbones are rotated 180$^\circ$ relative to each other: see the image on the right of Figure~\ref{fig:origami}(c), where only two of four possible orientations appear. The other two would put the $L$ pattern \emph{underneath} the origami; presumably they are absent due to weaker stacking energy of rotated helices. Nonetheless, if one imagines a mixture of different angles being used in different patches, rather than only one angle as in~\cite{sungwook_bonds_2011}, then it may be reasonable to assume a worst-case scenario in which a 180$^\circ$ rotation brings patches into contact such that they bind with the same strength as they would without rotation.}
(Otherwise the blunt ends will not be flush.)
Note that rotating a macrobond by 180$^\circ$ is equivalent to reflecting along each axis once.
To model this scenario, we define an \emph{$(n,w,\lambda)$-geometric 180-rotating orthogonal code} to be an $(n,w,\lambda)$-geometric orthogonal code that, defining $\flip(M_i) = \setr{(n-1-x,n-1-y)}{(x,y)\in M_i},$ also obeys $|\flip(M_i) \cap (M_j + \vv)| \leq \lambda$ for all $1 \leq i \leq j \leq \ell$ and all $\vec{v} \in \Z^2$.
Theorem~\ref{thm:polynomial-construction-flipping} shows codes that obey this extra constraint as well.

\subsection{Relationship between definitions and experimental reality}
\label{sec:intro:relationship-defn-experiments}

The most typical purpose of a code is to enable transmission of data robustly to error,
by ensuring each codeword is a large ``distance'' from all others.
However, the codes we define do not have the purpose of transmitting information.
Rather, the purpose of our codes is to enforce \emph{orthogonality of binding}:
ensuring large ``distance'' between two macrobonds
implies a small amount of binding strength between them,
which is desirable when specificity of binding is the goal.

One may ask, what is the point of designing such specific macrobonds?
One of the unifying goals of structural DNA nanotechnology, since its inception by Seeman in 1982~\cite{Seem82},
has been to \emph{build structures}, specifically to build nanoscale structures out of DNA.
The vast majority of experimental work in this area is bottom-up rather than top-down.
Specifically, one builds \emph{parts} of a structure (the parts being DNA strands, created using standard chemical synthesis approaches~\cite{beaucage1993oligodeoxyribonucleotides}), such that by mixing these parts together, they \emph{autonomously self-assemble} into a larger structure.

In a top-down approach, there is less need for orthogonal bonds.
For example, a carpenter building a house from wood boards (monomers) and nails (bonds)
can use the same type of nail to hold together all pairs of boards,
by making top-down choices to co-locate a pair of boards before driving the nail through.
On the other hand, bottom-up self-assembly is akin to throwing all the boards and nails together into the construction site and hoping they stick together as intended to form a house.
The nails must be designed to stick only to their intended boards;
\emph{orthogonality} means that they have low probability to erroneously hold together other unintended boards.
If more orthogonal bonds are available, then larger and more complex structures can be made.

The specific Watson-Crick base-pairing of DNA is an extremely useful tool for engineering orthogonality of bonds.
However, DNA sticky ends as the sole design tool has limits;
see~\cite{sungwook_bonds_2011} for a more detailed discussion of the relative advantages and disadvantages of DNA sticky end design compared to geometric bond design of the sort we study here.

A simplifying assumption of our definition of $(n,w,\lambda)$ geometric orthogonal codes is that each macrobond has the same number of patches $w$.
If each patch has the same binding strength,
so that the strength with which a macrobond binds is proportional to $w$,
this implicitly assumes that one would want all macrobonds in a system to be the same strength.

This may not be the case in all circumstances.
For example, certain work on algorithmic self-assembly~\cite{Winfree98simulationsof, Rot00, rothemund2004algorithmic}
(for background and surveys of the field, see~\cite{DotCACM, PatitzSurveyJournal, winslow2016brief})
requires that some ``strong'' bonds are twice as strong as other ``weak'' bonds
in order for the desired growth order of molecules to be the most kinetically favorable one.
It is possible to set certain experimental conditions
(e.g., temperature, salinity, concentrations)
to be such that a single weak bond is unfavorable and detaches relatively quickly,
yet two cooperating weak bonds have the strength of a single strong bond
and suffice to attach a molecule stably to a complex of other molecules.
Open Question~\eqref{open-question:multiple-w} in Section~\ref{sec:open-questions} discusses this idea in more detail.

Finally, we note one important distinction between the implementation of patches in Figure~\ref{fig:origami}(d) versus~\ref{fig:origami}(e).
Figure~\ref{fig:origami}(d) uses ``like-like'' binding: an active patch on the edge of one origami resembles the active patch to which it is intended to bind on the edge of another origami.
In contrast, Figure~\ref{fig:origami}(e) uses ``like-unlike'' (a.k.a. \emph{complementary}) binding: a patch implemented as a ``bump'' on one face of an origami is intended to bind to a ``dent'' on the face of another origami.
(Also, bumps on two faces of the first type could align some blunt ends, but this is a source of error not modeled in this paper.)
In modeling like-like binding, it would make sense to consider a macrobond reflecting along the horizontal or vertical axis and coming into contact with another macrobond that has not been reflected (including an unreflected copy of itself).
Since we do not consider this scenario, our definition implicitly assumes like-unlike binding.
Open Question~\eqref{open-question:reflect-one-axis} asks for codes that account for the other scenario.

\subsection{Related work}
\label{sec:related}

The most directly related theoretical work is the study of binary \emph{optical orthogonal codes} defined by Chung, Salehi, and Wei~\cite{optical_orthogonal_codes_1989}.
These codes contain 1D binary codewords and attempt to minimize the number of overlapping 1's (analogous to our nonspecific patches) between codewords;
overlapping 0's (analogous to neutral non-binding sites) are not penalized.
Also, these codes consider all possible translations of codewords; a codeword requires orthogonality not only to translations of other codewords (cross-correlation) but also to nonzero translations of itself (auto-correlation).

The major difference between optical orthogonal codes and our work is the geometric nature of our codes.
Each codeword represents a 2D face of a 3D molecular structure, so translations in both $x$ and $y$ coordinates must be considered.
They also use different parameters to bound auto-correlation and cross-correlation, but for the setting we are modeling, these both correspond to spurious molecular bonds, so it makes sense to use the same threshold for each.

Another difference with our setting is that optical orthogonal codes are more stringent in defining orthogonality under translation, since they use a different definition of translation that allows for more potential overlaps.
In\cite{optical_orthogonal_codes_1989}, translations are assumed to be modulo the codeword size, whereas in our setting such ``wrapping'' does not make sense: a molecular structure $\alpha$ moving off the end of another structure $\beta$ does not appear on the opposite side of $\beta$, hence could not contribute to the binding strength.
See Figure~\ref{fig:ooc}(b).

\begin{figure}[ht]
  \begin{centering}
    \includegraphics[width=3.5in]{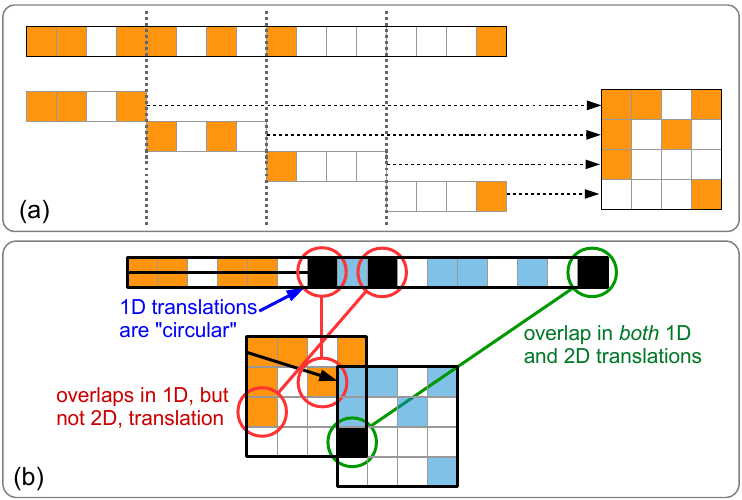}
    \caption{
      {\bf (a)}
      Each 1D codeword of an optical orthogonal codeword of length $n^2$ can be interpreted as a 2D macrobond (a codeword of a geometric orthogonal code) in a natural way by interpreting blocks of length $n$ as rows of the macrobond.
      {\bf (b)}
      Each translation of a 1D codeword by $k\in\Z^+$ can be interpreted as a translation of the equivalent 2D macrobond by $(x,y) = (k \mod n, \lfloor k/n \rfloor).$
      An example 1D translation by $+7$ is shown in blue, which corresponds to translating the 2D codeword by $(+3,-1)$.
      Each overlap between the original (orange) and the translated (blue) 2D macrobonds corresponds to an overlap between the original and translated 1D codewords.
      (See overlaps circled in green.)
      However, the converse does not hold: an overlap in the 1D codewords is not necessarily an overlap between the 2D codewords.
      (See red circles.)
      Also, in an optical orthogonal code, translations are ``circular'': each patch that moves off the end of the codeword wraps back to the beginning, resulting in further potential overlaps in the 1D case that are not counted in the 2D case.
      (See blue arrow.)
      Thus, each $(n^2,w,\lambda)$-optical orthogonal code, interpreting each codeword as a 2D macrobond as in (a), is a $(n,w,\lambda)$-geometric orthogonal code, but the converse does not hold.
    }
    \label{fig:ooc}
  \end{centering}
\end{figure}

Despite these differences, one could imagine applying the optical orthogonal codes of~\cite{optical_orthogonal_codes_1989} directly to our problem setting.
Indeed, every $(n^2,w,\lambda)$-optical orthogonal code is in fact an $(n,w,\lambda)$-geometric orthogonal code,
by interpreting each 1D codeword as the concatenation of the $n$ rows of a 2D codeword.
Figure~\ref{fig:ooc} shows how this interpretation works, and why the two types of codes are not equivalent.
Specifically, the constraints of 1D optical orthogonal codes are stronger than what is needed for 2D geometric orthogonal codes,
partly due to the ``wrapping'' in the definition of codeword translation in 1D optical orthogonal codes.

Table~\ref{tab:upper-lower-comparison} compares the $(n, n, \lambda)$-geometric orthogonal code sizes of our main construction to the $(n^2, n, \lambda)$-optical orthogonal code sizes of the best construction (Theorem 2) of~\cite{optical_orthogonal_codes_1989}, showing that we achieve larger code sizes in most tested cases.
There has been subsequent work on optical orthogonal codes.
However, much of it is for the special case of $\lambda=1$ and/or $w = 4$~\cite{fuji2000optical, yin1998some,chang2003combinatorial,ge2001constructions},~\cite[Section V.9]{colbourn2006handbook} or other special cases for single values of parameters~\cite{chung1990optical, chang2004further, buratti2011new, wang2012further, buratti2013optimal}.
We do note that for the 1D macrobonds studied by Woo and Rothemund~\cite{sungwook_bonds_2011}, 1D optical orthogonal codes are a more appropriate model than the 2D codes that we study in this paper. However, observe that the ``circular translation'' in the model of 1D optical orthogonal codes is not applicable to 1D macrobonds.
Thus, there may be better 1D codes that take advantage of the fact that patches in a macrobond,
translated off the end of a second macrobond,
cannot possible overlap patches in the latter macrobond.

2D optical orthogonal codes have been studied~\cite{twoDimOrthCodeLightwave2005, twoDimOrthCodeIEEETransOnCommunication2011, twoDimOrthCodeArxiv2013}.
The 2D nature of these codes reflects the fact that two different variables (e.g., time and wavelength) determine where 1's and 0's appear in the codeword.
However, these techniques do not apply directly to our problem, since they consider only translation in one dimension (time) while we must consider simultaneous translations along both dimensions.
In other words, the distinction of having two \emph{identical}, \emph{spatial} dimensions is important.

Huntley, Murugan, and Brenner~\cite{InformationCapacityofSpecificInteractionsPNAS} have also studied specific engineered molecular bonds from an information theory perspective.
They study a different model in which translation is disallowed.
They study ``color'' coding: extending the patches to allow some specificity, so that only equal-color patches can bind; see Section~\ref{sec:open-questions} for a discussion of this issue.
They compare color coding with ``shape'' coding: allowing the shape of a 1D edge to be nonflat, thus providing steric hindrance as an additional mechanism to prevent unintended binds (also discussed as an open question in Section~\ref{sec:open-questions}).
They run simulations to show that randomly selected shape codes have greater size than randomly selected color codes.

Since the beginning of DNA nanotechnology~\cite{Seem82} and DNA computing~\cite{Adleman1994}, there has been work on designing codes for DNA-based computers~\cite{frutos1997demonstration, brenneman2002strand, marathe2001oncombinatorial, garzon2004codeword}.
Often these have similar goals to those of the present paper:
designing DNA sequence pairs to bind to each other,
while minimizing unintended binding among all other pairs.
The main difference with our work is not goals, but analysis and techniques.

The biophysics governing the binding of DNA sequences is quite specific to nucleic acids.
Unlike in our model, there is a specificity of binding due to Watson-Crick base pairing: A binds to T and C binds to G.
Another major difference is that nucleic acid sequences are ``flexible'' 1D sequences embedded in 3D:
they can bend, knot, and otherwise contort to potentially allow many pairs of nucleotide bonds to form that are not necessarily the same “distance” apart on the 1D sequence.
On the other hand, in our more ``rigid'' model,
if two patches on a macrobond are separated by a vector $\vec{v}$,
then they can bind to another pair of patches on another macrobond
only if the latter pair are also separated by exactly the vector $\vec{v}$.
In summary, although the ultimate goal is similar
(the design of specific molecular bonds with minimal crosstalk),
the assumptions and techniques are quite different.

\section{Results}
\label{sec:main}

\newcommand{\GF}[1]{\mathbb{F}_{#1}}

\subsection{Lower bounds}

Let $\GF{n}$ denote the finite field of order $n$, where $n$ is prime, which can be interpreted as normal integer addition and multiplication modulo $n$, with field elements $\GF{n}=[n]$. 

The following theorem shows how to construct geometric orthogonal codes \emph{without} the 180-rotation constraint.

\begin{theorem}
\label{thm:polynomial-construction}
For each odd prime $n\in\N$ and $\lambda \in \{2,\ldots,n-1\}$, there is an $(n, n, \lambda)$-geometric orthogonal code of size $n^{\lambda-1} - n^{\lambda-2}$.
\end{theorem}

\begin{proof}
\noindent
{\bf Construction.}
Each macrobond is defined by a degree-$\lambda$ polynomial $p(x) = a_\lambda x^\lambda + a_{\lambda-1} x^{\lambda-1} + \dots + a_1 x + a_0$ over $\GF{n}$,
where the coefficients $a_i \in \GF{n}$ for $i \in \{0,\ldots,\lambda\}$ obey
$a_{\lambda-1} = a_0 = 0$,
and $a_\lambda \neq 0$.
For a polynomial $p$,
define the corresponding macrobond $M_{p} = \setr{ (x,p(x)) }{x \in \GF{n}}$, i.e., a patch in column $x$ on row $p(x)$.
There are $n-1$ choices for $a_\lambda$ and
$n^{\lambda-2}$ choices for $a_{\lambda-2}, a_{\lambda-3}, \dots, a_1$,
so there are $(n-1) n^{\lambda-2} = n^{\lambda-1}-n^{\lambda-2}$ such polynomials.

\noindent
{\bf Correctness.}
We now show that this code has auto-correlation and cross-correlation at most $\lambda$.
Translation by a vector $\vv = (\delta_x, \delta_y)\in\Z^2$ with $|\delta_x|$ or $|\delta_y|$ $\geq n$ implies correlation is~0.
So assume $|\delta_x|, |\delta_y| < n$.
For mathematical convenience we equivalently consider translating by $(-\delta_x, \delta_y)$.
Suppose there exist macrobonds $M_p$ and $M_q$ (possibly equal) with $p(x) = \sum_{i=0}^\lambda{a_i x^i}$ and $q(x) = \sum_{i=0}^\lambda{b_ix^i}$ that intersect on more than $\lambda$ patches under translation $(-\delta_x, \delta_y)$.
We will prove that $a_i=b_i$ for all $i\in\{0,\ldots,\lambda\}$ (i.e., the two macrobonds are the same) and that $\delta_x=\delta_y=0$ (i.e., the translation is $\vec{0}$), simultaneously establishing that the code has auto-correlation and cross-correlation at most $\lambda$.

Translating the polynomial $p$ by $(-\delta_x, \delta_y)$ results in the polynomial $p(x+\delta_x) + \delta_y$.
If this intersects with the polynomial $q(x)$ on more than $\lambda$ points,
then by the fundamental theorem of algebra,
$p(x+\delta_x) + \delta_y$ is identically $q(x)$,
i.e., they have the same coefficients.
Using the binomial theorem,
\begin{align*}
  p(x+\delta_x) +\delta_y
  = & \
  \delta_y + \sum_{i=0}^\lambda a_i (x+\delta_x)^i
  \\ =
  \delta_y + \sum_{i=0}^\lambda a_i \sum_{k=0}^i \binom{i}{k} \delta_x^{i-k} x^k
  = & \
  \delta_y + \sum_{k=0}^\lambda x^k \sum_{i=k}^\lambda a_i \binom{i}{k} \delta_x^{i-k}
\end{align*}
Thus, if $p(x+\delta_x)+\delta_y$ and $q(x)$ have the same coefficients,
then the $q(x)$ coefficient of the term $x^k$ is
$
  b_k = \sum_{i=k}^\lambda a_i \binom{i}{k} \delta_x^{i-k}.
$

In particular
$
b_{\lambda-1}
= \sum_{i=\lambda-1}^\lambda a_i \binom{i}{\lambda-1} \delta_x^{i-(\lambda-1)}
= a_{\lambda-1} \binom{\lambda-1}{\lambda-1} \delta_x^{\lambda-1-(\lambda-1)}
  +
  a_{\lambda} \binom{\lambda}{\lambda-1} \delta_x^{\lambda-(\lambda-1)}
= a_{\lambda-1} + a_\lambda \lambda \delta_x.
$
Since $a_{\lambda-1}=b_{\lambda-1}=0$, this implies $a_\lambda \lambda \delta_x=0$.
Since $\lambda,a_\lambda \neq 0$, this implies $\delta_x = 0.$

The constant term of $p(x+\delta_x) + \delta_y = p(x)+\delta_y$ is $a_0+\delta_y$,
and the constant term of $q(x)$ is $b_0$.
Since the coefficients of $p$ and $q$ are equal, $a_0+\delta_y = b_0$, but since $a_0=b_0=0$, this implies $\delta_y=0$ also.
\end{proof}

Note that the macrobonds in the construction of Theorem~\ref{thm:polynomial-construction} are ``column-balanced'': there is exactly one patch per column of the macrobond.
This is due to our proof technique and is not itself a goal of the macrobonds.
Such column-balanced macrobonds are a bit easier to reason about theoretically,
so they come up again in Theorem~\ref{thm:random-construction} when we prove bounds on the size of random codes.
However, we know of no intrinsic benefit to the column-balanced property.

We now construct a geometric \emph{180-rotating} orthogonal code, using similar techniques to the previous proof.

\begin{theorem}
\label{thm:polynomial-construction-flipping}
For each odd prime $n\in\N$ and $\lambda \in \{2,\ldots,n-1\}$, there is an $(n, n, \lambda)$-geometric 180-rotating orthogonal code of size $(n^{\lambda-1} - n^{\lambda-2} - n^{\lceil \lambda/2 \rceil})/2$.
\end{theorem}

\begin{proof}
The construction is a modification of Theorem~\ref{thm:polynomial-construction} obtained by taking a subset of the code that avoids high correlation in the new orientation.
Assume that there exist polynomials $p$ and $q$
and a translation vector $(\delta_x,\delta_y)$
such that $M_q$ has correlation $> \lambda$ with $\flip(M_p)+(\delta_x,\delta_y)$.
By definition, $\flip(M_p) = \setr{(n-1-x,n-1-p(x))}{x \in \GF{n}} = \setr{(x,n-1-p(n-1-x))}{x \in \GF{n}}$,
so by the fundamental theorem of algebra, $n-1-p(n-1-x + \delta_x) + \delta_y = q(x)$ for all $x\in\GF{n}$.

The $x^{\lambda-1}$ term of $q(x)$ is 0 by construction.
Expanding the two leading terms of $n-1-p(n-1-x + \delta_x) + \delta_y$ with this constraint implies that $a_\lambda \lambda (n-1+\delta_x) (-1)^\lambda = 0$.
Then since $\lambda, a_\lambda, (-1)^\lambda \neq 0$, it must be that $n-1+\delta_x = 0$.
So $\delta_x = 1$ and $n-1-p(n-1-x + \delta_x) + \delta_y = n-1-p(-x) + \delta_y$.

Expanding all terms of $n-1-p(-x) + \delta_y$ and $q(x)$ leads to $n-1 + \delta_y = n-1 + a_0 + \delta_y = b_0 = 0$.
So $\delta_y = 1$ and thus $n-1-p(n-1-x + \delta_x) + \delta_y = n-p(n-x) = -p(-x) = q(x)$.
Expanding these polynomials implies that for all $i$, $a_i(-1)^{i+1} = b_i$.

Define a \emph{complement} of a polynomial $p(x)$ to be $\sum_{i=1}^\lambda{a_i(-1)^{i+1}} x^i$, i.e. the polynomial $q(x)$ such that $-p(-x) = q(x)$.
As the above shows, the current code allows two macrobonds to have correlation $> \lambda$ points if one of them is rotated by 180 degree, but only if the two corresponding polynomials are complements.
Observe that every polynomial has a unique complement, and some polynomials are complements of themselves.
Self-complement polynomials have auto-correlation more than $\lambda$ and complementary pairs have cross-correlation more than $\lambda$.
A 180-rotating code can be obtained by taking any subset of the code that contains no polynomial and its complement.

A self-complementary polynomial is one in which for all even $i$, $a_i + a_i=0$.
Since $n > \lambda \geq 2$ is prime, $n$ is odd and $a_i + a_i = 0$ only if $a_i = 0$.
Thus, the number of self-complementary polynomials is at most $n^{\lceil \lambda/2 \rceil}$.

First remove all such self-complementary polynomials from the code.
The remaining polynomials occur in uniquely complementary pairs; remove one member of each pair arbitrarily, cutting the number of remaining polynomials in half.
So the 180-rotating code has size $(n^{\lambda-1} - n^{\lambda-2} - n^{\lceil \lambda/2 \rceil})/2$.
\end{proof}


The proofs of Theorems~\ref{thm:polynomial-construction} and~\ref{thm:polynomial-construction-flipping} use finite field arithmetic only for fields of prime size, even though there are finite fields of size $p^m$ for any prime $p$ and $m\in\Z^+$.
This is due to our technique of mapping the field elements to the integers $\{0,1,\ldots,n-1\}$ in such a way that translations in the $x$ and $y$ direction can be interpreted as changes in the underlying field elements.
To make the correspondence straightforward, the characteristic of the field
(the number of times the multiplicative identity 1 can be added before a repetition)
must be equal to the field size, which is true exactly when the field size is prime.
For prime size fields, translating $x$ by the integer $m$ is the same as adding 1 to $x$, $m$ times in a row.
Otherwise, translation from a point $(x,y)$ to a point $(x',y')$, where (for example), $x'-x$ is greater than the field's characteristic but less than its size, would not be interpretable as mapping the element $x$ to $x'$ by repeated addition, and would invalidate the parts of the proof that reason about the effects of translation on the underlying polynomial evaluation.

\subsection{Upper bounds}

As observed in Section~\ref{sec:related},
any $(n^2,w,\lambda)$-optical orthogonal code is automatically a $(n,w,\lambda)$-geometric orthogonal code (by re-arranging from 1D to 2D as in Figure~\ref{fig:ooc}),
so lower bounds on the size of $(n^2,w,\lambda)$-optical orthogonal codes also hold for $(n,w,\lambda)$-geometric orthogonal codes.
However, the converse does not hold:
not every $(n,w,\lambda)$-geometric orthogonal code is a $(n^2,w,\lambda)$-optical orthogonal code.
Thus \emph{upper} bounds on the size of $(n^2,w,\lambda)$-optical orthogonal codes
(such as the $n^2(n^2-1) \ldots (n^2-\lambda) / (w (w-1) \ldots (w-\lambda))$ upper bound proved in~\cite{optical_orthogonal_codes_1989})
do not automatically apply to geometric orthogonal codes.
In principle, due to the relaxed constraints of geometric orthogonal codes, their optimal sizes could potentially be larger than for optical orthogonal codes.

The following theorem shows an upper bound on the size of any geometric orthogonal code.
Intuitively, the proof is a packing argument that works as follows.
Given a set $S \subseteq [n]^2$,
imagine a ``canonical'' translation $S^\llcorner$ of $S$
so that it is ``flush'' against the $x$- and $y$-axes:
the translation has at least one $x$-coordinate and at least one $y$-coordinate equal to 0,
but no negative coordinates.
Two sets $S,T \subseteq [n]^2$ obey $|S \cap T| \leq \lambda$ if and only if,
for all subsets $S_\lambda \subseteq S$ and $T_\lambda \subseteq T$
such that $|S_\lambda| = |T_\lambda| = \lambda+1$,
we have $S_\lambda \neq T_\lambda$.
Furthermore, $S_\lambda$ and $T_\lambda$ are equal under some translation if and only if
$S_\lambda^\llcorner = T_\lambda^\llcorner$.
Let $\calM$ be a $(n, w, \lambda)$-geometric orthogonal code.
Each macrobond has precisely $\binom{w}{\lambda+1}$ subsets of size $\lambda+1$,
so across all $|\calM|$ macrobonds, there are $|\calM| \cdot \binom{w}{\lambda+1}$ total induced subsets of size exactly $\lambda+1$.
The code has auto- or cross-correlation more than $\lambda$ if and only if there exists a pair of these subsets having equal canonical translations.
We count the number of distinct canonically translated subsets of $[n]^2$ of size $\lambda+1$, observing that $|\calM| \cdot \binom{w}{\lambda+1}$ must be below this count to avoid repeating a subset by the pigeonhole principle.

\begin{theorem}
\label{thm:upper-bound}
Any $(n, w, \lambda)$-geometric orthogonal code has size at most
$\frac{1}{\binom{w}{\lambda+1}}
\cdot
\left[ \binom{n^2-1}{\lambda}
+ \sum_{x_0=1}^{n-1} \sum_{y_0=1}^{n-1} \binom{n^2 - x_0 - y_0 - 1}{\lambda-1}
\right]$.
\end{theorem}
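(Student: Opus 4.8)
The plan is to bound the code size by a double-counting argument over $(\lambda{+}1)$-subsets of positions. Specifically, I would count pairs $(M_i, S)$ where $M_i$ is a codeword and $S \subseteq M_i$ is a set of $\lambda+1$ patches. Each codeword contributes exactly $\binom{w}{\lambda+1}$ such pairs. For the upper bound, I want to show that each admissible $(\lambda{+}1)$-subset $S \subseteq [n]^2$ lies in at most one codeword, so that the total number of pairs is at most the number of admissible $S$. Summing, $|\calM| \cdot \binom{w}{\lambda+1} \le \#\{\text{admissible }S\}$.

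The key structural observation is the \emph{translation-normalization} trick. If $S$ and $S'$ are $(\lambda{+}1)$-subsets lying in codewords $M_i$ and $M_j$ respectively, and $S' = S + \vv$ for some $\vv \in \Z^2$, then $|M_i \cap (M_j + (-\vv) \cdot(\dots))|$ — more precisely, $M_i \supseteq S$ and $M_j \supseteq S' - \vv = S$, so $M_i$ and $M_j$, suitably translated so their copies of $S$ coincide, share at least $\lambda+1$ patches; by low cross-correlation (if $i\neq j$) or low auto-correlation (if $i=j$ but $\vv \neq \vec 0$) this forces $i=j$ and $\vv = \vec 0$. Hence, rather than counting all $(\lambda{+}1)$-subsets of $[n]^2$, it suffices to count one representative per translation class, and moreover each such representative is used by at most one codeword. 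So I would pick the canonical representative of each translation class: the unique translate whose ``minimal'' patch (in some fixed order) sits at a designated anchor. The two terms in the bound then come from a case analysis on where the anchor can be.

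Here is the case split I expect. Order $[n]^2$ lexicographically (row-major, say, via the bijection $(x,y)\mapsto xn+y$ into $[n^2]$). Given a translation class of a $(\lambda{+}1)$-set, translate so its lex-smallest element is at position $0 = (0,0)$. The remaining $\lambda$ elements lie in $[n^2]\setminus\{0\}$, but not arbitrarily: they must be reachable by \emph{nonnegative} translations in the sense that the whole set still fits in $[n]^2$ for \emph{some} translate — actually the cleaner statement is that after anchoring the minimum at $0$, all other elements have coordinates in a restricted region. The first term $\binom{n^2-1}{\lambda}$ counts the case where we only constrain the other $\lambda$ points to avoid position $0$; the correction sum $\sum_{x_0,y_0} \binom{n^2-x_0-y_0-1}{\lambda-1}$ accounts for the fact that if the second-smallest point is at offset $(x_0,y_0)$ with $x_0,y_0 \ge 1$ (so the set ``wraps'' in a way that a pure first-coordinate or pure second-coordinate anchoring would double-count), the remaining $\lambda-1$ points are confined to a shrunken region of size $n^2 - x_0 - y_0 - 1$. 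I would make this precise by carefully defining the canonical form — likely anchoring on the lex-min element \emph{and} ensuring the set is ``translation-reduced'' meaning it cannot be shifted left or down while staying in $[n]^2$ — and then counting reduced sets directly.

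The main obstacle is pinning down exactly which $(\lambda{+}1)$-subsets of $[n]^2$ are valid canonical representatives, so that the count matches the stated formula with no slack or overcounting. The subtlety is geometric: a translation class of a point set in $\Z^2$ has a well-defined ``bounding box'', and it contributes to the code only if that box fits in $[n]^2$; among the translates that fit, I must choose exactly one canonical anchor, and the formula suggests the anchoring rule is asymmetric between the two coordinates (hence the single sum $\binom{n^2-1}{\lambda}$ plus a symmetric-looking double-sum correction rather than a clean product). Getting the inclusion–exclusion bookkeeping right — verifying that every reduced set is counted once and that the two-term expression is an exact (or valid upper) count of reduced sets of size $\lambda+1$ in $[n]^2$ — is the crux; the rest (the double-counting inequality and the appeal to low correlation) is routine given Theorem~\ref{thm:polynomial-construction-no-flipping}'s style of argument.
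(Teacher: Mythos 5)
Your high-level strategy --- double-counting the $(\lambda{+}1)$-subsets of codewords, observing that low auto-/cross-correlation forces any two such subsets lying in the same translation class to come from the same codeword with zero relative translation, and then bounding the number of translation classes via pigeonhole --- is exactly the paper's argument, and your injectivity step is correct. The gap is precisely where you flag it: the count of canonical representatives. Moreover, your proposed normalization makes that count harder than it needs to be. Anchoring the lex-min element (row-major) at the origin does not keep the set inside $[n]^2$: a point with larger $x$ than the lex-min point may have strictly smaller $y$, so after the shift it lands at a negative $y$-coordinate. Hence your claim that ``the remaining $\lambda$ elements lie in $[n^2]\setminus\{0\}$'' is false, and the region you would actually have to count over is roughly $(\{0\}\times\{1,\dots,n-1\}) \cup (\{1,\dots,n-1\}\times\{-(n-1),\dots,n-1\})$ together with a bounding-box condition --- recoverable in principle, but it does not yield the stated two-term formula by any natural case split.

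The paper instead normalizes each coordinate independently: $S^{\llcorner} = S - (x_{\min},y_{\min})$, where $x_{\min}$ and $y_{\min}$ are the minimum $x$- and $y$-coordinates of $S$, possibly attained by \emph{different} points. Then $S^{\llcorner}\subseteq[n]^2$ always, and $S^{\llcorner}$ touches both axes. The two terms are then a \emph{disjoint} case split, not a base count plus an inclusion--exclusion correction as you guessed: either $(0,0)\in S^{\llcorner}$, giving $\binom{n^2-1}{\lambda}$ choices for the remaining points; or $(0,0)\notin S^{\llcorner}$, in which case there are two distinct points $(x_0,0)$ and $(0,y_0)$ with $x_0,y_0\ge 1$ minimal on their respective axes, and the remaining $\lambda-1$ points must lie in $\{(x,y)\in[n]^2 : (x=0\Rightarrow y>y_0)\wedge(y=0\Rightarrow x>x_0)\}$, a set of size $(n-1)^2+(n-x_0-1)+(n-y_0-1)=n^2-x_0-y_0-1$. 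Finally, your worry about achieving the count ``with no slack'' is unnecessary: since you only need that distinct pairs $(M_i,S)$ map to distinct canonical sets, any valid over-count of canonical sets already gives the upper bound.
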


\begin{proof}
For $S \subseteq [n]^2$, define $S^{\llcorner} = S+(-x_\text{min},-y_\text{min})$, where
$x_\text{min} = \min\limits_{(x,y) \in S}(x)$
and
$y_\text{min} = \min\limits_{(x,y) \in S}(y)$,
to be the \emph{canonical translation} of $S$.
Note that $S^\llcorner \subseteq [n]^2$, and $S^\llcorner$ has at least one point on the $x$-axis and at least one point on the $y$-axis.

Each macrobond $M \subseteq [n]^2$ with $w=|M|$ has exactly $\binom{w}{\lambda+1}$ subsets of size exactly $\lambda+1$.
Denote these subsets as $M_{\lambda, 1},M_{\lambda, 2},\ldots,M_{\lambda,\binom{w}{\lambda+1}}$.
Note that macrobond $M$ has auto-correlation $\leq \lambda$ if and only if
$M^\llcorner_{\lambda,i} \neq M^\llcorner_{\lambda,j}$ for all $1 \leq i < j \leq \binom{w}{\lambda+1}$,
and that macrobonds $M$ and $N$ have cross-correlation $\leq \lambda$ if and only if
$M^\llcorner_{\lambda,i} \neq N^\llcorner_{\lambda,j}$ for all $1 \leq i,j \leq \binom{w}{\lambda+1}$.
To avoid making any two of these translated subsets equal,
an $(n,w,\lambda)$-geometric orthogonal code $\calM$ obeys
$|\calS| = |\calM| \cdot \binom{w}{\lambda+1}$,
where $\calS = \{ M^\llcorner_{\lambda,i} \mid M \in \calM, 1 \leq i \leq \binom{w}{\lambda+1} \}$.
Thus $|\calM| = \frac{1}{\binom{w}{\lambda+1}} |\calS|$, and to prove the theorem, it suffices to show
$|\calS| \leq \binom{n^2-1}{\lambda} + \sum_{x_0=1}^{n-1} \sum_{y_0=1}^{n-1} \binom{n^2 - x_0 - y_0 - 1}{\lambda-1}$.

To bound $|\calS|$, we simply count the number of canonical translations $S^\llcorner$ of subsets $S \subseteq [n]^2$ with $|S|=\lambda+1$.
To be a canonical translation, $S^\llcorner$ must have at least one point on the $x$-axis and at least one point on the $y$-axis.
We count two subcases separately.
First assume $(0,0) \in S^\llcorner$.
Then there are $\binom{n^2-1}{\lambda}$ ways to pick the other $\lambda$ points in $S^\llcorner$ besides $(0,0)$.

Now assume the other case: $(0,0) \not\in S^\llcorner$.
Let $x_0 = \min\limits_{(x,0) \in S^\llcorner} x$ and $y_0 = \min\limits_{(0,y) \in S^\llcorner} y$. That is, $x_0$ and $y_0$ are, respectively, the smallest $x$- and $y$-coordinates of points in $S^\llcorner$ whose other coordinate is 0.
Because $(0,0) \not\in S^\llcorner$, we have $x_0,y_0 > 0$.
Once the two points defining $x_0$ and $y_0$ are fixed, there are $\lambda-1$ other points to pick to be in $S^\llcorner$,
and they must be picked from the set
$A=\{(x,y)\in[n]^2 \mid   (x=0 \implies y > y_0) \wedge (y=0 \implies x > x_0) \}$.
Note that
$|A| = (n-1)^2 + (n-x_0-1) + (n-y_0-1)
= n^2 - x_0 - y_0 - 1$,
where $(n-1)^2$ is the number of available points off both axes,
and the terms $(n-x_0-1)$ and $(n-y_0-1)$ count the number of available points on each axis.
There are thus $\binom{n^2 - x_0 - y_0 - 1}{\lambda-1}$ ways to pick these $(\lambda-1)$ points from $A$.
Therefore there are $\sum_{x_0=1}^{n-1} \sum_{y_0=1}^{n-1} \binom{n^2 - x_0 - y_0 - 1}{\lambda-1}$ total sets in this subcase. 
\end{proof}

\newcommand{\Lcon}{L_{\mathrm{\ref{thm:polynomial-construction}}}}
\newcommand{\Looc}{L_{\mathrm{ooc}}}
\newcommand{\Udet}{U_{\mathrm{\ref{thm:upper-bound}}}}
\newcommand{\Uran}{U_{\mathrm{\ref{thm:random-construction}}}}

\begin{table}[ht]
    \centering
    \caption{
    Comparison of code size lower bound $\Lcon(n,\lambda)$ of Theorem~\ref{thm:polynomial-construction} with code size upper bound $\Udet(n,\lambda)$ (with $w=n$) of Theorem~\ref{thm:upper-bound} and lower bound $\Looc(n^2,\lambda)$ given by optical orthogonal code construction of Theorem 2 of\cite{optical_orthogonal_codes_1989}.
    }
    \label{tab:upper-lower-comparison}
    \begin{tabular}{|r|r|r|r|r|}
      \hline
      $n$ & $\lambda$ & $\Lcon(n,\lambda)$ & $\Udet(n,\lambda)$ & $\Looc(n^2,n,\lambda)$ \\
      \hline \hline
      5 & 2 & 4 & 58 & 0 \\
      5 & 3 & 20 & 956 & 20 \\
      5 & 4 & 100 & 26,490 & 2,124 \\
      \hline
      7 & 2 & 6 & 74 & 0 \\
      7 & 3 & 42 & 1,340 & 3 \\
      7 & 4 & 294 & 27,740 & 94 \\
      7 & 5 & 2,058 & 777,148 & 5,942 \\
      7 & 6 & 14,406 & 40,291,608 & 1,753,072 \\
      \hline
      11 & 2 & 10 & 109 & 0 \\
      11 & 3 & 110 & 2,637 & 0 \\
      11 & 4 & 1,210 & 63,413 & 9 \\
      11 & 5 & 13,310 & 1,626,997 & 179 \\
      11 & 6 & 146,410 & 46,982,678 & 5,435 \\
      \hline
      13 & 2 & 12 & 127 & 0 \\
      13 & 3 & 156 & 3,491 & 0 \\
      13 & 4 & 2,028 & 93,188 & 4 \\
      13 & 5 & 26,364 & 2,564,783 & 76 \\
      13 & 6 & 342,732 & 75,841,707 & 1,690 \\
      \hline
      17 & 2 & 16 & 162 & 0 \\
      17 & 3 & 272 & 5,592 & 0 \\
      17 & 4 & 4,624 & 181,316 & 1 \\
      17 & 5 & 78,608 & 5,850,750 & 24 \\
      17 & 6 & 1,336,336 & 194,074,096 & 389 \\
      \hline
      19 & 2 & 18 & 180 & 0 \\
      19 & 3 & 342 & 6,837 & 0 \\
      19 & 4 & 6,498 & 241,967 & 0 \\
      19 & 5 & 123,462 & 8,434,602 & 15 \\
      19 & 6 & 2,345,778 & 298,556,284 & 234 \\
      \hline
    \end{tabular}
\end{table}

Table~\ref{tab:upper-lower-comparison} compares the code size lower bound
$\Lcon(n,\lambda) = n^{\lambda-1} - n^{\lambda-2}$
achieved by the algorithm of Theorem~\ref{thm:polynomial-construction}
with the upper bound of Theorem~\ref{thm:upper-bound} (for the special case of $w=n$)
$\Udet(n,\lambda) = \frac{1}{\binom{n}{\lambda+1}}  \left[ \binom{n^2-1}{\lambda}  + \sum_{x_0=1}^{n-1} \sum_{y_0=1}^{n-1} \binom{n^2 - x_0 - y_0 - 1}{\lambda-1}\right]$.
Also shown is the lower bound
$\Looc(n^2,n,\lambda) = \frac{\binom{n^2}{n} - \frac{n^2-1}{2} \binom{n}{\lambda+1} \binom{n^2}{n-\lambda-1} }{n^2 \cdot \sum_{i=\lambda+1}^{\min(n^2-n,n)} \binom{n^2-n}{n-i} \binom{n}{i} }$
given by the $(n^2,n,\lambda)$-optical orthogonal code construction of Theorem~2 of~\cite{optical_orthogonal_codes_1989}.

Theorem~\ref{thm:upper-bound} is our strongest upper bound but is unwieldy.
The following corollary gives a weaker but simpler bound.

\begin{corollary}\label{cor:upper-bound-simplified}
Any $(n, w, \lambda)$-geometric orthogonal code has size at most
$\frac{(\lambda+1)^2 e^{\lambda+1}}{w^{\lambda+1}}  n^{2\lambda}$.
\end{corollary}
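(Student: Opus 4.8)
The plan is to start from the exact bound in Theorem~\ref{thm:upper-bound} and replace each binomial coefficient by an elementary estimate. First I would bound the double sum: since $n^2 - x_0 - y_0 - 1 \le n^2$ for all $x_0,y_0 \ge 1$ and there are fewer than $n^2$ pairs $(x_0,y_0)$, monotonicity of $\binom{m}{k}$ in $m$ gives $\sum_{x_0=1}^{n-1}\sum_{y_0=1}^{n-1}\binom{n^2-x_0-y_0-1}{\lambda-1} \le n^2\binom{n^2}{\lambda-1}$. So the bracketed quantity in Theorem~\ref{thm:upper-bound} is at most $\binom{n^2-1}{\lambda} + n^2\binom{n^2}{\lambda-1}$.

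Next I would apply the standard upper bound $\binom{m}{k} \le m^k/k!$ to each term, obtaining $\binom{n^2-1}{\lambda} \le n^{2\lambda}/\lambda!$ and $n^2\binom{n^2}{\lambda-1} \le n^2 \cdot (n^2)^{\lambda-1}/(\lambda-1)! = n^{2\lambda}/(\lambda-1)!$. Adding these and factoring out $1/\lambda!$ gives a clean bound of $\frac{\lambda+1}{\lambda!}\,n^{2\lambda}$ for the bracket. For the denominator $\binom{w}{\lambda+1}$ I would use the matching lower bound $\binom{m}{k} \ge (m/k)^k$, i.e. $\binom{w}{\lambda+1} \ge (w/(\lambda+1))^{\lambda+1}$. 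Combining, the code size is at most
\[
\left(\frac{\lambda+1}{w}\right)^{\lambda+1} \cdot \frac{\lambda+1}{\lambda!}\, n^{2\lambda}
= \frac{(\lambda+1)^{\lambda+2}}{w^{\lambda+1}\,\lambda!}\, n^{2\lambda}.
\]

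Finally I would extract the claimed form by the observation that $\frac{(\lambda+1)^{\lambda}}{\lambda!} \le e^{\lambda+1}$, which is immediate because $\frac{(\lambda+1)^{\lambda}}{\lambda!}$ is exactly the $k=\lambda$ term of the (nonnegative) Taylor series $e^{\lambda+1} = \sum_{k\ge 0}\frac{(\lambda+1)^k}{k!}$. Substituting $(\lambda+1)^{\lambda+2}/\lambda! = (\lambda+1)^2 \cdot (\lambda+1)^\lambda/\lambda! \le (\lambda+1)^2 e^{\lambda+1}$ yields the stated bound $\frac{(\lambda+1)^2 e^{\lambda+1}}{w^{\lambda+1}} n^{2\lambda}$.

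I do not anticipate a genuine obstacle here: the argument is a sequence of routine estimates. The only place requiring any care is keeping the two bracket terms combined with a common factor of $1/\lambda!$ (rather than bounding them separately and losing a factor), and spotting the Taylor-series identity that converts the leftover $(\lambda+1)^\lambda/\lambda!$ into the $e^{\lambda+1}$ in the target expression.
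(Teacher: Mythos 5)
Your proof is correct and follows essentially the same route as the paper's: start from Theorem~\ref{thm:upper-bound}, bound the bracketed quantity above by an expression of the form $(\lambda+1)\cdot(\text{binomial in } n^2 \text{ and } \lambda)$, bound $\binom{w}{\lambda+1}$ below by $\bigl(w/(\lambda+1)\bigr)^{\lambda+1}$, and absorb the leftover constant into $e^{\lambda+1}$. The only cosmetic difference is your choice of standard estimates --- $\binom{m}{k}\le m^k/k!$ together with the Taylor-series term $(\lambda+1)^{\lambda}/\lambda!\le e^{\lambda+1}$, where the paper uses $\binom{m}{k}< e^k m^k/k^k$ together with $(1+1/\lambda)^{\lambda}< e$ --- and both yield the identical final bound.
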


\begin{proof}
  We use the bounds
  $\binom{m-1}{k} < \binom{m}{k}$,
  $\binom{m-1}{k-1} = \frac{k}{m}\binom{m}{k}$,
  and
  $\frac{m^k}{k^k} < \binom{m}{k} < \frac{e^k \cdot m^k}{k^k}$,  for all $m,k \in \Z^+$.
  Then
  \begin{align*}
    &\binom{n^2-1}{\lambda} + \sum_{x_0=1}^{n-1} \sum_{y_0=1}^{n-1} \binom{n^2 - x_0 - y_0 - 1}{\lambda-1}
    \\ & <
    \binom{n^2}{\lambda} + \sum_{x_0=1}^{n} \sum_{y_0=1}^{n} \binom{n^2-1}{\lambda-1}
    =
    \binom{n^2}{\lambda} + n^2 \binom{n^2-1}{\lambda-1}
    \\ & =
    \binom{n^2}{\lambda} + n^2 \frac{\lambda}{n^2} \binom{n^2}{\lambda}
    = 
    (\lambda+1) \binom{n^2}{\lambda}
    < 
    (\lambda+1) \frac{ e^{\lambda} n^{2\lambda} }{\lambda^\lambda}.
  \end{align*}

  Also, $\binom{w}{\lambda+1} > \frac{w^{\lambda+1}}{(\lambda+1)^{\lambda+1}}.$
  Combining these bounds with Theorem~\ref{thm:upper-bound}, we have that the size of any $(n,w,\lambda)$-geometric orthogonal code is at most
  \begin{align*}
  &
  \frac{1}{\binom{w}{\lambda+1}}
  \cdot
  \left( \binom{n^2-1}{\lambda}
+ \sum_{x_0=1}^{n-1} \sum_{y_0=1}^{n-1} \binom{n^2 - x_0 - y_0 - 1}{\lambda-1}\right)
  \\ & <
  \frac{1}{\frac{w^{\lambda+1}}{(\lambda+1)^{\lambda+1}}}
  \cdot
  (\lambda+1) \frac{ e^{\lambda} n^{2\lambda} }{\lambda^\lambda}
  =
  \frac{ (\lambda+1) e^{\lambda} }{w^{\lambda+1}}
  \cdot
  \frac{ (\lambda+1)^{\lambda+1}}{\lambda^{\lambda}}
  \cdot
  n^{2\lambda}
  \\ & =
  \frac{ (\lambda+1)^2 e^{\lambda} }{w^{\lambda+1}}
  \cdot
  \left( \frac{\lambda+1}{\lambda} \right)^{\lambda}
  \cdot
  n^{2\lambda}
  \\ & <
  \frac{ (\lambda+1)^2 e^{\lambda} }{w^{\lambda+1}}
  \cdot
  e
  \cdot
  n^{2\lambda}
  =
  \frac{(\lambda+1)^2 e^{\lambda+1}}{w^{\lambda+1}}
  \cdot
  n^{2\lambda}.  \qedhere
  \end{align*}
\end{proof}

The next corollary applies in the special case where $n=w$.

\begin{corollary}\label{cor:upper-bound-n-equals-w}
Any $(n, n, \lambda)$-geometric orthogonal code has size at most
$(\lambda+1)^2 e^{\lambda+1} n^{\lambda-1}$.
\end{corollary}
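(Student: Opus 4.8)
The plan is to observe that this corollary is an immediate specialization of Corollary~\ref{cor:upper-bound-simplified}. That corollary already establishes that any $(n,w,\lambda)$ geometric orthogonal code has size at most $\frac{(\lambda+1)^2 e^{\lambda+1}}{w^{\lambda+1}} n^{2\lambda}$ for arbitrary weight $w$. I would simply substitute $w = n$ into this expression and simplify the resulting power of $n$: the factor $n^{2\lambda}$ divided by $w^{\lambda+1} = n^{\lambda+1}$ leaves $n^{2\lambda - (\lambda+1)} = n^{\lambda-1}$, so the bound becomes $(\lambda+1)^2 e^{\lambda+1} n^{\lambda-1}$, exactly as claimed.

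The single step is therefore the substitution $w \mapsto n$ followed by the exponent arithmetic $2\lambda - \lambda - 1 = \lambda - 1$; there is nothing else to verify, since all the combinatorial counting and the binomial estimates were already carried out in the proofs of Theorem~\ref{thm:upper-bound} and Corollary~\ref{cor:upper-bound-simplified}. In particular, the hypothesis $w \in \{2,\ldots,n^2\}$ required by the model is satisfied by $w = n$ whenever $n \geq 2$, so the specialization is legitimate.

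I do not anticipate any obstacle here — this is a one-line corollary whose only purpose is to record the $n = w$ case in a clean form (matching the lower bound $n^{\lambda-1} - n^{\lambda-2}$ of Theorem~\ref{thm:polynomial-construction-no-flipping} up to the multiplicative factor $(\lambda+1)^2 e^{\lambda+1}$, which is what makes the comparison in Table~\ref{tab:upper-lower-comparison} meaningful). If desired, one could instead substitute $w = n$ directly into the exact bound of Theorem~\ref{thm:upper-bound} and re-run the estimates $\binom{m-1}{k} < \binom{m}{k}$, $\binom{m-1}{k-1} = \frac{k}{m}\binom{m}{k}$, and $\frac{m^k}{k^k} < \binom{m}{k} < \frac{e^k m^k}{k^k}$ from scratch, but this merely reproduces the computation already done for Corollary~\ref{cor:upper-bound-simplified} and is not necessary.
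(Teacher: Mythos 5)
Your proposal is correct and matches the paper's intent exactly: the corollary is stated without proof precisely because it is the immediate specialization $w=n$ of Corollary~\ref{cor:upper-bound-simplified}, and your exponent arithmetic $n^{2\lambda}/n^{\lambda+1}=n^{\lambda-1}$ is right. Nothing further is needed.
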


Note that the upper bound of Corollary~\ref{cor:upper-bound-n-equals-w} asymptotically matches the lower bound of Theorem~\ref{thm:polynomial-construction} when $\lambda$ is constant with respect to $n$.

\subsection{Random codes}

Although simple and efficient, it is worth asking if the technique of Theorem~\ref{thm:polynomial-construction} is overkill, compared to the most obvious attempt to generate codes: picking macrobonds at random.
The next theorem shows that this approach yields much smaller codes if required to have one patch per column.

\begin{theorem}
\label{thm:random-construction}
    Let $0 < \epsilon < 1$.
    Let $\calM$ be a set of macrobonds selected uniformly at random with replacement
    from among those macrobonds with exactly one patch per column.
    If $|\calM| \geq \frac{\lambda+1}{n} \left(1 + \sqrt{2n^{(\lambda+1)} \ln \frac{1}{\epsilon}} \right)$,
    then $\calM$ is a $(n, n, \lambda)$-geometric orthogonal code with probability at most $\epsilon$.
\end{theorem}

\begin{proof}
The described macrobond distribution can be generated by iterating over each column and selecting one row uniformly at random to contain the patch in that column.
Each selection of patches in nonoverlapping \emph{blocks} of $\lambda+1$ consecutive columns can be viewed as a symbol in an alphabet of size $n^{\lambda+1}$.
Because the blocks are nonoverlapping, each symbol selection is independent.
Then each macrobond is partially specified by $\lfloor n / (\lambda + 1) \rfloor$ symbols defining the patch placements in the first $(\lambda + 1) \lfloor n / (\lambda + 1) \rfloor$ columns.
If any symbol is repeated (either within a macrobond, or between two different macrobonds), then the code has auto- or cross-correlation $> \lambda$.

The probability that $k$ randomly selected symbols from an alphabet of size $n^{\lambda+1}$ does not repeat a symbol is
\begin{align*}
    \prod_{i=0}^{k-1}(1 - i/n^{\lambda+1})
    & <
    \prod_{i=0}^{k-1}{e^{-i/n^{\lambda+1}}}
    = 
    e^{-\sum_{i=0}^{k-1} i / n^{\lambda+1}}
    \\ &=
    e^{-k(k-1)/(2n^{\lambda+1})}
    < 
    e^{-(k-1)^2/(2n^{\lambda+1})}.
\end{align*}
Thus the probability that a symbol \emph{does} repeat is at least $1 - e^{-(k-1)^2/(2n^{\lambda+1})}$.
By algebra, the inequality $1-\epsilon \leq 1 - e^{-(k-1)^2/(2n^{\lambda+1})}$ holds provided $k \geq 1 + \sqrt{2n^{\lambda+1} \ln \frac{1}{\epsilon}}$.

Since each macrobond induces
$\lfloor n / (\lambda + 1) \rfloor$
symbols, a set of
$\frac{\lambda+1}{n} \left( 1 + \sqrt{2n^{(\lambda+1)} \ln\frac{1}{\epsilon}} \right)$
macrobonds induces
$\frac{\lambda+1}{n} \left( 1 + \sqrt{2n^{(\lambda+1)} \ln \frac{1}{\epsilon}} \right) \cdot \left\lfloor \frac{n}{\lambda + 1} \right\rfloor \geq 1 + \sqrt{2n^{(\lambda+1)} \ln \frac{1}{\epsilon}}$ symbols and thus with probability $\geq 1 - \epsilon$ contains a repetition.
\end{proof}

\begin{table}[ht]
    \centering
    \caption{
    Empirical test of random codes for $w=n$.
    }
    \label{tab:random-tests}
    \begin{tabular}{|r@{\hspace{2pt}}|r@{\hspace{2pt}}|r@{\hspace{2pt}}|r@{\hspace{2pt}}|r@{\hspace{2pt}}|r@{\hspace{2pt}}|r@{\hspace{2pt}}|r@{\hspace{2pt}}|r@{\hspace{2pt}}|}
      \hline
      $n$ & $\lambda$ & ave & med & stddev & max & $\Lcon(n,\lambda)$ & $\Uran(n,\lambda,\frac{1}{2})$ \\
      \hline \hline
      5 & 2 & 2 & 2 & 1.2 & 6 & 4 & 8 \\
      5 & 3 & 14 & 13 & 7.5 & 41 & 20 & 24 \\
      5 & 4 & 160 & 151 & 76.7 & 378 & 100 & 66 \\
      \hline
      7 & 2 & 2 & 1 & 0.7 & 4 & 6 & 9 \\
      7 & 3 & 7 & 6 & 3.7 & 17 & 42 & 33 \\
      7 & 4 & 33 & 30 & 18.0 & 96 & 294 & 109 \\
      7 & 5 & 303 & 282 & 182.0 & 972 & 2,058 & 347 \\
      \hline
      11 & 2 & 1 & 1 & 0.3 & 2 & 10 & 11 \\
      11 & 3 & 3 & 3 & 1.5 & 9 & 110 & 52 \\
      11 & 4 & 12 & 12 & 5.7 & 30 & 1,210 & 215 \\
      11 & 5 & 59 & 59 & 28.2 & 125 & 13,310 & 855 \\
      11 & 6 & 369 & 333 & 216.7 & 1038 & 146,410 & 3,308 \\
      \hline
      13 & 2 & 1 & 1 & 0.1 & 2 & 12 & 12 \\
      13 & 3 & 2 & 2 & 1.2 & 6 & 156 & 61 \\
      13 & 4 & 9 & 9 & 4.9 & 22 & 2,028 & 276 \\
      13 & 5 & 37 & 35 & 20.7 & 111 & 26,364 & 1,194 \\
      13 & 6 & 213 & 208 & 99.6 & 469 & 342,732 & 5,022 \\
      \hline
      17 & 2 & 1 & 1 & 0.0 & 1 & 16 & 14 \\
      17 & 3 & 2 & 1 & 0.6 & 4 & 272 & 80 \\
      17 & 4 & 5 & 4 & 3.1 & 14 & 4,624 & 412 \\
      17 & 5 & 22 & 20 & 12.8 & 70 & 78,608 & 2,042 \\
      17 & 6 & 99 & 91 & 50.7 & 219 & 1,336,336 & 9,821 \\
      \hline
      19 & 2 & 1 & 1 & 0.0 & 1 & 18 & 15 \\
      19 & 3 & 1 & 1 & 0.6 & 4 & 342 & 89 \\
      19 & 4 & 4 & 4 & 2.2 & 11 & 6,498 & 487 \\
      19 & 5 & 17 & 16 & 8.5 & 40 & 123,462 & 2,550 \\
      19 & 6 & 82 & 79 & 41.1 & 175 & 2,345,778 & 12,969 \\
      \hline
    \end{tabular}
\end{table}

Table~\ref{tab:random-tests} shows the result of testing random codes for $n=w$ and several prime values of $n$, comparing them to the proved theoretical bounds of Theorems~\ref{thm:polynomial-construction} and~\ref{thm:random-construction}.
In each row of Table~\ref{tab:random-tests}, 100 trials were run.
In each trial, macrobonds were generated by selecting $n$ patches uniformly at random (without replacement) from the $n \times n$ square.
Macrobonds were generated successively and added to the code until the auto- or cross-correlation of the code exceeded $\lambda$, and the code size recorded for the trial.
The average (``ave'', rounded to nearest integer), median (``med''), standard deviation (``stddev''), and maximum (``max'') code sizes among the 100 trials are shown.
These are compared to the lower bound $\Lcon(n,\lambda) = n^{\lambda-1} - n^{\lambda-2}$ achieved by the algorithm of Theorem~\ref{thm:polynomial-construction},
as well as the randomized upper bound
$\Uran(n,\lambda,\frac{1}{2}) = \frac{\lambda+1}{n} \left(1 + \sqrt{2n^{(\lambda+1)} \ln 2} \right)$
proven in Theorem~\ref{thm:random-construction} for random codes restricted to one patch per column (setting
$\epsilon=\frac{1}{2}$).

\begin{table}[ht]
    \centering
    \caption{
    Empirical test of random codes with one patch per column.
    }
    \label{tab:random-tests-one-patch-per-column}
    \begin{tabular}{|r@{\hspace{2pt}}|r@{\hspace{2pt}}|r@{\hspace{2pt}}|r@{\hspace{2pt}}|r@{\hspace{2pt}}|r@{\hspace{2pt}}|r@{\hspace{2pt}}|r@{\hspace{2pt}}|r@{\hspace{2pt}}|}
      \hline
      $n$ & $\lambda$ & ave & med & stddev & max & $\Lcon(n,\lambda)$ & $\Uran(n,\lambda,\frac{1}{2})$ \\
      \hline \hline
      5 & 2 & 2 & 2 & 1.1 & 6 & 4 & 8 \\
      5 & 3 & 8 & 8 & 4.3 & 19 & 20 & 24 \\
      5 & 4 & 51 & 48 & 25.5 & 119 & 100 & 66 \\
      \hline
      7 & 2 & 1 & 1 & 0.5 & 3 & 6 & 9 \\
      7 & 3 & 4 & 4 & 2.3 & 11 & 42 & 33 \\
      7 & 4 & 20 & 17 & 10.4 & 64 & 294 & 109 \\
      7 & 5 & 92 & 86 & 52.8 & 248 & 2,058 & 347 \\
      \hline
      11 & 2 & 1 & 1 & 0.1 & 2 & 10 & 11 \\
      11 & 3 & 2 & 2 & 1.2 & 8 & 110 & 52 \\
      11 & 4 & 8 & 7 & 4.1 & 21 & 1,210 & 215 \\
      11 & 5 & 32 & 29 & 16.6 & 77 & 13,310 & 855 \\
      11 & 6 & 143 & 131 & 80.0 & 497 & 146,410 & 3,308 \\
      \hline
      13 & 2 & 1 & 1 & 0.0 & 1 & 12 & 12 \\
      13 & 3 & 2 & 2 & 0.9 & 5 & 156 & 61 \\
      13 & 4 & 7 & 6 & 3.8 & 27 & 2,028 & 276 \\
      13 & 5 & 22 & 19 & 11.9 & 62 & 26,364 & 1,194 \\
      13 & 6 & 108 & 92 & 55.4 & 300 & 342,732 & 5,022 \\
      \hline
      17 & 2 & 1 & 1 & 0.0 & 1 & 16 & 14 \\
      17 & 3 & 1 & 1 & 0.6 & 4 & 272 & 80 \\
      17 & 4 & 4 & 4 & 2.4 & 13 & 4,624 & 412 \\
      17 & 5 & 14 & 13 & 6.7 & 31 & 78,608 & 2,041 \\
      17 & 6 & 57 & 54 & 29.2 & 125 & 1,336,336 & 9,821 \\
      \hline
      19 & 2 & 1 & 1 & 0.0 & 1 & 18 & 15 \\
      19 & 3 & 1 & 1 & 0.4 & 2 & 342 & 89 \\
      19 & 4 & 3 & 3 & 1.7 & 12 & 6,498 & 487 \\
      19 & 5 & 13 & 12 & 6.5 & 40 & 123,462 & 2,550 \\
      19 & 6 & 51 & 47 & 26.7 & 118 & 2,345,778 & 12,969 \\
      \hline
    \end{tabular}
\end{table}

Table~\ref{tab:random-tests-one-patch-per-column} shows test results for random codes restricted to exactly one patch per column.
In this case, each random macrobond is generated by selecting, in each of $n$ columns, one row uniformly at random in which to place a patch.

The results of Tables~\ref{tab:random-tests} and~\ref{tab:random-tests-one-patch-per-column} suggest that the expected code size is much smaller than that achievable by our algorithm.
There appears to be large variance in the code sizes achieved by generating codes at random.
However, even the maximum among 100 trials, in nearly all cases, fell far below the code sizes given by the algorithm of Theorem~\ref{thm:polynomial-construction}.

We emphasize that Theorem~\ref{thm:random-construction} applies only to the most na\"{i}ve way to generate random codes.
It does not rule out that better performance may be obtained by a more sophisticated strategy.
For example, a greedy algorithm that generates macrobonds at random
until a new one appears that has low auto-correlation,
and low cross-correlation with existing macrobonds
(rather than quitting upon encountering the first ``bad'' macrobond),
would outperform the strategy above.
A more sophisticated stochastic local search may perform even better.
Our goal in this section is not to find the ``best'' randomized method, but merely to demonstrate that the deterministic algorithm of Theorem~\ref{thm:polynomial-construction} has better performance than the simplest imaginable randomized algorithm.
It is also the case that we were able to theoretically analyze the random codes of Theorem~\ref{thm:random-construction}, but do not know how to do this for more sophisticated randomized algorithms.
Often random codes perform quite well,
depending on the task,
so it is curious that in this case, they do not.

\section{Open Questions}\label{sec:open-questions}

A number of directions for future work suggest themselves.

\begin{enumerate}
  \item 
  We chose to define a macrobond as a subset of an $n \times n$ square for convenience, and because it worked well with our proof technique using polynomials over finite fields.
  An obvious generalization is to find geometric orthogonal codes that work over $n \times m$ rectangles for $n \neq m$.
  Of course, one can simply add/remove empty rows/columns without altering the auto- or cross-correlation, but is there a technique for generating macrobonds that is ``naturally'' defined over a rectangle or other geometry?

  \item \label{open-question:other-w}
  Our lower bound technique works for $w=n$, where $w$ is the desired number of patches per macrobond.
  Can we generalize to arbitrary $w$?
  The obvious way to generalize to larger values of $w$ is to assign several polynomials $p_1,\ldots,p_k$ (where $k < n$) to each macrobond $M$, defining $M = \setr{(x,p_i(x))}{x \in \GF{n}, 1 \leq i \leq k}.$
  Overlap between different polynomials in the same macrobond (i.e., $p_i(x)=p_j(x)$ for some $x\in\GF{n}$ and $1\leq i<j\leq k$) could result in up to $(k-1)\lambda$ total points of overlap, so rather than having $n k$ patches in the macrobond, some could end up with as few as $n k - (k - 1)\lambda$ patches.
  By removing arbitrarily chosen points from macrobonds with more points than this, we could assume all have exactly $w = k n - (k - 1)\lambda$ patches.
  A straightforward modification of the proof of  Theorem~\ref{thm:polynomial-construction} then shows that choosing degree-$\lambda$ polynomials results in an $(n, n k - (k - 1)\lambda, \lambda k^2)$-geometric orthogonal code of size $\lfloor (n^{\lambda-1} - n^{\lambda-2})/k \rfloor$.\footnote{Essentially the same proof works, but now each of the $k^2$ pairs of polynomials between two macrobonds could contribute $\lambda$ overlapping patches, resulting in up to $\lambda k^2$ total overlapping patches. However, each macrobond now has $k>1$ polynomials associated to it, so with $n^{\lambda-1} - n^{\lambda-2}$ total polynomials we get $\lfloor (n^{\lambda-1} - n^{\lambda-2})/k \rfloor$ total macrobonds.}
  Is there another way to generalize to other values of $w$,
  including $w$ that are not of the form $kn - (k-1)\lambda$?


  \item
  Reduce the upper bound of Theorem~\ref{thm:upper-bound}.

  \item
  Increase the lower bound of Theorems~\ref{thm:polynomial-construction} or~\ref{thm:polynomial-construction-flipping}.

  \item 
  Generalize from primes to arbitrary $n\in\Z^+$.

  \item
  The upper bound on random codes of Theorem~\ref{thm:random-construction} applies to macrobonds selected uniformly at random from the set of macrobonds that have one patch per column. Generalize to macrobonds selected uniformly at random from the set of \emph{all} macrobonds of a fixed weight $w$.

  \item
  Decrease the upper bound on random codes proven in Theorem~\ref{thm:random-construction}.
  The large difference between the medians and the $U_r(n,\lambda,\frac{1}{2})$ numbers in rows of Tables~\ref{tab:random-tests} and~\ref{tab:random-tests-one-patch-per-column} suggest that the $U_r(n,\lambda,\frac{1}{2})$ bound is not tight.

  \item 
  In defining orthogonality of two macrobonds, we allow them to translate relative to each other and to rotate, but only by 180$^\circ$.
  A 2D macrobond based on generalizing the scheme of Figure~\ref{fig:origami}(d) in the most obvious way, in which the blunt ends face orthogonal to the origami face rather than parallel to it as in Figure~\ref{fig:origami}(e), would not have a patch shape that automatically disallows non-180$^\circ$ rotations.\footnote{As mentioned in Section~\ref{sec:intro:statement-of-result}, there are physical reasons to conjecture that such rotations have weaker stacking bonds than the ``standard'' rotation. } 
  Thus, it would be interesting to consider adding a rotational constraint to the definition of geometric orthogonal code.
  For such a macrobond, it would make sense to consider overlaps when a rotation brings points ``close'' to each other, even if not exactly overlapping.
  For example, perhaps patch pairs separated by distance at least~1 are far enough apart that they cannot bind (even with distortion as seen in Fig.~\ref{fig:origami}(f)), but those of distance less than~1 (even if distance $>0$) could bind.
  Then two macrobonds have correlation $> \lambda$ if some relative translation and rotation of them brings $> \lambda$ patch pairs to strictly less than distance~1 from each other.
  In other words, patches behave as diameter-$1$ circles moving continuously, rather than as width-1 squares moving discretely by integer distances, and correlation corresponds to the number of overlapping circular patches between two macrobonds.

  \item 
  We model patches as completely non-specific bonds.
  DNA blunt ends are \emph{relatively} nonspecific, but even so, a GC/CG stack, for instance, is significantly stronger than an AT/TA stack.
  The macrobonds employed in~\cite{sungwook_bonds_2011} use only GC/CG stacks to enforce uniformity, but other stack types are allowed in~\cite{dietz_bonds_science_2015}.
  One can imagine ways to add some specificity to patches by choice of terminating base pair, or possibly by using DNA sticky ends in place of stacking bonds.
  The problem is then more accurately modeled by defining a macrobond to be a function $M:[n]^2 \to C \cup \{\mathsf{null}\}$, where $C$ is a finite set of ``colors'', and $\mathsf{null}$ represents the absence of a patch.
  Then, two aligned patches with colors $c_1,c_2 \in C$ have strength $\mathsf{str}(c_1,c_2)$ (for $C$ being the set of possible terminating DNA base pairs, $\mathsf{str}(c_1,c_2)$ is in Table 1 of\cite{santalucia2004thermodynamics}).

  \item
  \label{open-question:multiple-w}
  Complementary to the previous question, we observed in Section~\ref{sec:intro:relationship-defn-experiments} that some experimental work requires bonds to have unequal strength.
  If all patches in a macrobond have the same strength, this would correspond to using different numbers of patches in different macrobonds.
  There are many interesting variations of this question, but here is a concrete open question:
  Suppose that we want exactly two strengths of macrobonds, $w_1$ and $w_2 \neq w_1$.
  Give an algorithm that, on input $n,w_1,w_2,\lambda,s_1,s_2 \in \Z^+$, outputs $s_1$ macrobonds (subsets of $n \times n$ square) of weight $w$ and $s_2$ macrobonds of weight $w_2$, such that all macrobonds have auto-correlation and pairwise cross-correlation at most $\lambda$, or the algorithm reports that no such macrobonds exist.
  (Note that in this setting there is no ``largest'' code:
  one may be able to trade off $s_1$ and $s_2$ in several Pareto-optimal ways.)

  \item
  \label{open-question:reflect-one-axis}
  Our formalization of the concept of macrobonds resembles Figure~\ref{fig:origami}(e) more than~\ref{fig:origami}(d) in the sense that there are two types of faces (``bump'' type faces and ``dent'' type faces), and a macrobond is always formed between opposite-type faces.
  In contrast, macrobonds formed in Figure~\ref{fig:origami}(d) are between faces of the same ``type''.
  In this case, one could imagine a macrobond coming into contact with a reflected copies of other macrobonds, rather than rotated copies as captured by our geometric 180-rotating orthogonal codes.
  Is a result similar to Theorem~\ref{thm:polynomial-construction-flipping} possible for this definition of ``flipping'' a macrobond?

  \item
  If we think of the 1D edge of an origami as vertical, then all patches lie at $x=0$ (since the edge is vertical), and a macrobond chooses a subset of $y$ values at which to place patches.
  Woo and Rothemund~\cite{sungwook_bonds_2011} study a related technique for creating specific macrobonds,
  in which patches are placed at \emph{all possible} $y$ values along the edge, but modifies the \emph{shape} of the edge itself so that some patches lie at different $x$ values; see Fig.~3 of~\cite{sungwook_bonds_2011}.
  This sterically prevents all patches from bonding unless the shapes are complementary and aligned properly.
  So for $n$ patch locations, a code is specified by function $c: [n] \to \{0,1,\ldots,d\}$, where $d \in \N$ is the maximum allowed ``depth'' (position along the $x$-axis) of a patch, relative to the patches that are furthest away from the center of the origami, which defined to be at depth 0.
  It would be interesting to prove upper and lower bounds on code sizes based on $n$, $d$, and threshold $\lambda$ (see also~\cite{InformationCapacityofSpecificInteractionsPNAS}).

  \item
  Instead of modeling the macrobond as a discrete set of points, model it as a subset $M \subset \R^2$ of the plane.
  A similar setting was considered by Gopinath, Kirkpatrick, Rothemund, and Thachuk~\cite{gopinath2016progressive}, who studied the following problem motivated by DNA origami experiments:
  Design a \emph{shape} $S$ (a bounded, connected subset of $\R^2$)
  and a ``target'' shape $T$ (possibly $T=S$ but not necessarily),
  such that, starting from any initial placement (translation and rotation) of $S$ having with non-zero overlap with $T$,
  there is a continuous rigid motion taking $S$ to a \emph{unique} placement that (globally) maximizes the area of overlap between $S$ and $T$, such that the motion has a monotonically increasing overlap
  (i.e., there is no local maximum or plateau of suboptimal overlap in which $S$ can get ``stuck'').
  An interesting open problem is to find several (or even \emph{two}) shape-target pairs that have this property, but that also have low cross-correlation.
  (In this setting auto-correlation is not a concern, since the lack of local maxima or plateaus implies that misaligned translations between a shape and its own target will correct themselves by re-alignment.)

\end{enumerate}

\noindent{\bf Acknowledgements.}
{ 
    We are grateful to Matt Patitz for organizing the 2015 University of Arkansas Self-assembly Workshop, where this project began, and to the participants of that workshop and support from NSF grant CCF-1422152.
    We thank the participants of the 2015 Workshop on Coding Techniques for Synthetic Biology at the Univ. of Illinois, Urbana-Champaign, especially Han Mao Kiah, Farzad Farnoud, Urbashi Mitra, and Olgica Milenkovic, for bringing optical orthogonal codes to our attention and giving valuable feedback.
    We thank Paul Rothemund and Sungwook Woo for explaining known physical properties of stacking bonds.
    %
    %
    We thank anonymous reviewers for pointing out mistakes in an early draft, and in particular for several very helpful suggestions for cleaning up the presentation.
    We are indebted to Ray Li for correcting several mistakes in the conference version of this paper, in particular for showing how our proof technique for Theorems~\ref{thm:polynomial-construction} and~\ref{thm:polynomial-construction-flipping} works only for prime $n$, not prime powers, and for observing that we require $a_0=0$ in each proof as well, changing the bounds in those theorems, and for pointing out that using multiple polynomials per macrobond allows one to generalize to larger values of $w$, at the cost of potentially having to remove a small number of overlapping points, as discussed in Open Question~\eqref{open-question:other-w}.
}

\bibliographystyle{plain}
\bibliography{geometric-bonds}

\end{document}